\begin{document}

\title{Using an Epidemiological Approach to Maximize Data Survival in the Internet of Things}
 \author{Abdallah Makhoul, Christophe Guyeux, Mourad Hakem,\\ and Jacques M. Bahi}
 
\maketitle

\begin{abstract}
The internet of things (IoT) has gained worldwide attention in recent years. It transforms the everyday objects that surround us into proactive actors of the Internet, generating and consuming information. An important issue related to the appearance of such large-scale self-coordinating IoT is the reliability and the collaboration between the objects in the presence of environmental hazards. High failure rates lead to significant loss of data. Therefore, data survivability is a main challenge of the IoT. In this paper, we have developed a compartmental e-Epidemic SIR (Susceptible-Infectious-Recovered) model to save the data in the network and let it survive after attacks. Furthermore, our model takes into account the dynamic topology of the network where natural death (crashing nodes) and birth are defined and analyzed. Theoretical methods
and simulations are employed to solve and simulate the system of equations developed and to analyze the model. 
\end{abstract}

\section{Introduction}

The Internet of Things (IoT) or Internet of objects has gained worldwide attention in recent years, particularly with the proliferation of new communication technologies and connected devices. The main idea behind the IoT is to bridge the gap between the physical world of humans and the virtual world of electronics via smart objects. These smart objects allow the interactions between humans and their environment by providing, processing and delivering any sort of information or command. This concerns a large variety of IoT applications, which contribute to our everyday life. They cover a wide range from traditional equipment to general household objects, which help to make people's lives easier~\cite{17,18}. Sensors and actuators will be integrated in buildings, vehicles, and common environment and can tell us about them, their state or their surroundings.

The IoT built from smart things or objects needs to address challenges related to system architecture, design and development, integrated management, business models and human involvement. These challenges will be addressed by: 
\begin{itemize}
\item energy issues in all their phases, from harvesting to conservation and usage, are essential in the development of the IoT;
\item scalability, IoT applications require a large number of devices where it is difficult to implement due to restrictions on time, memory, processing, and energy constraints; 
\item standardization and interoperability of data processing awareness are highly needed; 
\item big data management, it is expensive to transmit huge volumes of raw data in the complex and heterogeneous network. Therefore, IoT needs data compression and data fusion to reduce the data volume; 
\item interaction between hardware, software, algorithms as well as the development of smart interfaces among things; 
\item security, hackers, malicious software and virus in the communication process might disturb data and information integrity.  
\end{itemize}
With the development of the IoT technology, we must develop new techniques and concepts to improve the existing security and privacy in order to adapt to new technological and societal challenges~\cite{19}.

With this increasing use of technology and its social life applications such as intelligent transportation, smart cities, smart homes, etc~\cite{19}, the use of the Internet increases, constantly offering new functionalities and facilities. Although these applications can be extremely useful, they must ensure personal privacy else private information may be leaked at any time. Furthermore, the IoT using wireless sensor networks proved crucial in disaster and rescue missions such as natural disasters (e.g. earthquakes), life-threatening mining accidents, monitoring of critical infrastructures, etc. A major obstacle that delays the appearance of such large-scale self-coordinating IoT is the reliability of the objects/sensors in the presence of environmental hazards. High failure rates lead to a significant loss of data. Therefore, data survivability is a main challenge of the IoT. For instance, in urban disaster areas, the collected data can identify hazards and save lives whereas nodes failures or attacks lead to data loss. Another example is about critical infrastructure monitoring where data can be lost after a portion of the critical infrastructure suffers a disaster. Thus, data survivability and availability is particularly important in the IoT and cannot be ignored. Therefore, collaboration and transmitting crucial information between nodes are essential to maximize the amount of monitoring-related data that can survive.

For that purpose, in this paper we study and develop a new epidemic-domain inspired approach to model the information survivability in the IoT. Somehow, the propagation of the information in a network of things could be compared with a disease transmitted by vectors when dealing with public health. In~\cite{20} the authors discussed the spreading nature of biological viruses, leading to infectious diseases in human populations through several epidemic models. 

The propagation of the information throughout the IoT can be studied by using epidemiological models for disease propagation. The model we present here is based on the SIR (Susceptible - Infected - Recovered) model. A node/thing is susceptible to a data item when it is online and functioning normally; it can receive the information that must survive. Based on a classical epidemic model, various dynamic models for malicious attacks propagation were proposed~\cite{21,22,23,24,25,26,27,28,29,30,31}. The majority of these models were studied for powerful computer networks and based on the fully-connected assumption of the network which is not the case of the IoT based on wireless sensor networks with heavy resources and a very dynamical topology. In this paper, we introduce a thorough analysis of the conditions that can assure data survivability in the internet of things. We study a new SIR model that considers dynamic topologies and nodes energy constraints. Our novelty in this paper is that we study arbitrary dynamic network topologies instead of static networks. We establish a new information propagation model which incorporates the effects of the dynamic IoT topology and its heavy resources. The dynamics of this model are studied, specifically, the level of the attacks and the disparition of nodes. Some numerical examples are given to support this result. 

The remainder of the paper is organized as follows: Section~\ref{PW} briefly reviews the related work.
The SIR model for Data Survivability in the IoT is presented in Section~\ref{SIR}. Sections \ref{USIR} and \ref{VSIR} detail the proposed epidemic schemes in a comparative manner and give variations of SIR applied to the IoT. In these sections theoretical and numerical results will be presented. 
Finally, Section~\ref{CONC} concludes this research work.

\section{Related work}
\label{PW}
In the literature, we can find several mathematical models which illustrate the dynamical behavior of the transmission of biological diseases and/or computer viruses. Based on the Kermack and McKendrick SIR classical epidemic model~\cite{Kermack27,32}, dynamical models for malicious objects propagation were proposed. Due to the numerous similarities between biological viruses and computer viruses, several approaches and models are proposed to study the spreading and attacking behavior of computer viruses in different phenomena, e.g. virus propagation~\cite{33,34,35}, e-mail propagation schemes~\cite{36}, virus immunization~\cite{37,38}, quarantine~\cite{39,40}, vaccination~\cite{41}, etc.
The authors in~\cite{42} propose an improved SEI (susceptible-exposed-infected) model to simulate virus propagation. \cite{43} propose an SEIS-V epidemic model with vertical transmission using vaccination (that is, run of anti-virus software time and again with full efficiency) so that a temporary recovery from the infection of worms can be obtained.

More recently, epidemiological models have been used not only to transmit viruses in computer network but also to ensure the security in wireless sensor networks~\cite{44,45,DiPietro11,DiPietro13}. The authors in~\cite{44,45} studied the robustness of filtering on nonlinearities in packet losses and sensors. Unattended Wireless Sensor Networks (UWSNs), have been introduced by Di Pietro \emph{et al.} in~\cite{DiPietro08}, where adversaries can compromise some sensor nodes and selectively destroy data. In such networks, nodes collect data from the area under consideration, and then they try to upload all the stored data when the sink comes around and the main challenge is data survivability. The epidemiology community has developed the so-called SIR and SIS models~\cite{DiPietro11,DiPietro13} of infection. The SIS model (Susceptible - Infected - Susceptible) is suitable for, e.g., the common flu, where nodes may be infected, healed (and susceptible), and infected again. The SIR model (Susceptible- Infected - Recovered) is for example suitable for mumps, where a node, after being infected, becomes recovered (with life-time immunity). SIS, SIR, and SIRS models have been investigated by authors of these research works, in order to derive the parameters that can ensure information to survive. In these articles, the $S(t)$ compartment is constituted by sensors that do not possess the datum at time $t$, while $I(t)$ is the compartment of sensors that possesses it. Finally, the $R(t)$ compartment is constituted by sensors that have been compromised by the attacker.  

To the best of our knowledge none of the previous work have studied epidemic models for data survivability in the IoT. Furthermore, existing approaches have not taken into account the energy consumption constraints of the nodes. As in the IoT, usually the nodes' energy is provided by a battery that can be emptied due to data acquisition, transmission, or simply the functioning cost of keeping nodes alive. On the other hand, the topology of the networks they consider is static, the network's lifetime is unbounded, and nodes cannot die due to empty batteries. Our intention in this paper is to provide a new epidemic model dedicated to the IoT, by taking into account these issues.

\section{Formulation of SIR model for IoT}
\label{SIR}
\subsection{Introducing the Kermack \& McKendrick model}
 
In IoT, the global network can be divided in three
compartments, namely the nodes/things $S$ susceptible to receive the datum of
interest (intrusion detection, etc.), the ones that currently store it (Informed) $I$, 
and the recovered nodes $R$ that have been 
compromised by the attacker: their stored datum has been recovered.

Suppose now that between $S$ and $I$, the transmission rate is $b I$, where
$b$ is the contact rate, which is the probability of transferring the information in a
contact between a susceptible node and another having the datum. Indeed, as proven
by Di Pietro \emph{et al.}, such a situation occurs when the network is composed
by $N$ nodes, and if each node forwards the datum with
probability $\frac{\alpha}{N}$~\cite{DiPietro11,DiPietro13} ($\alpha$ is the transition rate). 

Suppose, additionally, that the rate to pass between $I$ and $R$, is $c$:
the attacker is able to individuate the nodes containing the target
information, and to destroy each of them with this probability $c$.
Notice that, if the duration of the information survivability
is $D$, then $c = \frac{1}{D}$, as a node experiences one recovery
in $D$ units of time.

\begin{figure}[ht]
\centering
\includegraphics{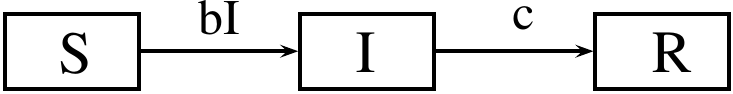}
\caption{SIR model}
\label{SIRmodel}
\end{figure}

Under such hypotheses and as stated in~\cite{DiPietro11,DiPietro13}, 
the nodes population follows the so-called SIR model 
of Kermack \& McKendrick~\cite{Kermack27} depicted in Figure~\ref{SIRmodel}.
Notice that the total sensors population is equal to $N=S+I+R=S_0+I_0+R_0$, 
which is a constant: the number of connected nodes does not evolve. 
In particular, only two of the three populations of nodes have to be studied.

\subsection{First Theoretical Study}
The time dependant SIR model can be expressed by the following set of ordinary non-linear differential equations : 
\begin{equation}
\label{modelSir1}
\left\{
\begin{array}{l}
\frac{dS}{dt} = - b I S\\\\
\frac{dI}{dt} = b I S - c I\\\\
\frac{dR}{dt} = c I .\\
\end{array}
\right.
\end{equation}

\begin{figure}[ht]
\centering
\includegraphics[scale=0.6]{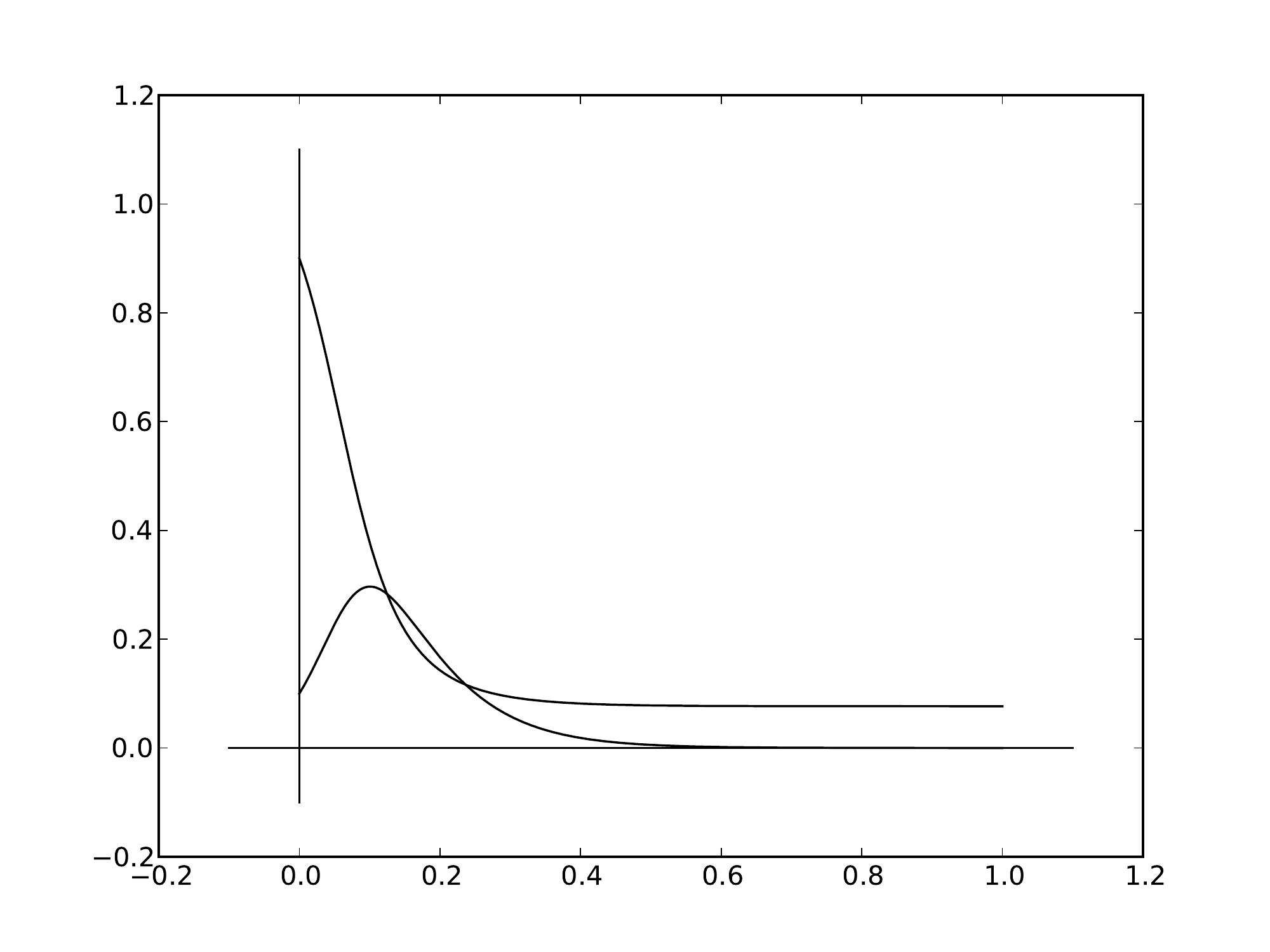}
\caption{Evolution of the fractions $S$ and $I$ of susceptible and having the datum nodes with b = 0.4, c = 0.15, s(0) = 0.9, and i(0) = 0.1 (SIR model).}
\label{imggg2}
\end{figure}

We suppose that each informed node (which has received the datum) communicates with $k$ nodes per unit time, where $k$ is independent of the size of the network. Thus, it communicates with $\frac{kS}{N}$ susceptible nodes $S$. If a fraction $\tau$ of these nodes receives the information, then each informed node $I$ communicates the information to $\frac{\tau kS}{N}$ new susceptible nodes $S$ per unit time, and $b=\frac{\beta}{N}$, where $\beta = k \tau$. We called $\tau$ the transmissibility of the information.

In the equation system~\ref{modelSir1}, we can see that the right member of the first line is negative, and the right member of the third line is positive. Therefore, we deduce that $\frac{dS}{dt} \leq 0$ and $\frac{dR}{dt} \geq 0$, then, assuming that the quantities $S$ and $R$ are positive we obtain:

\medskip
\begin{itemize}
\item $0\leq S(t) \leq S(0) \leq N$,
\item $0\leq R(0) \leq R(t) \leq N$.
\end{itemize}
\medskip

We can observe that $S$ is a decreasing function and bounded from below, and it converges to a limit denoted by $S(\infty)$). Similarly, $R$ converges to $R(\infty)$, and then $I=N-R-S$ converges also, to $I(\infty) = N - S(\infty) - R(\infty)$.

We can deduce from the above that the number of the informed nodes approaches $0$ ($I(\infty) = 0$). Indeed, if this were not the case and since $\frac{dR}{dt} = c I$, we can deduce that for $t$ large enough, $\frac{dR}{dt} > c \frac{I(\infty)}{2}>0$ and $ R(\infty) = \infty$, which is absurd. 

\section{The SIR Model study}
\label{USIR}
\subsection{First Theoretical Results}

Let $R_e = \frac{S(0) \beta}{Nc}$ denote the number of the effective information reproductions in the network, and $R_0 = \frac{\beta}{c}$ be the basic reproduction number. 
If, at time $t=0$ the entire network (all the connected nodes) are susceptible and only one node informed (which means that $S(0)=N-1$ et $I(0)=1$), and if the network is large, then $R_e = \frac{(N-1) \beta}{Nc}$ is approximately equal to $R_0$, which we are going to assume in the reminder of the paper.  

\medskip

Under this assumption of large network, we prove that $R_e$ is the limit value that determines whether the information will propagate within the network, or whether it will quickly disappear. 

\begin{proposition}
\label{prop1}
If $R_e \leq 1$, then $I(t)$ decreases to $0$ as $t \rightarrow \infty$. Else, $I(t)$ grows to a maximum, then decreases to $0$ (epidemic transmission of the information).
\end{proposition}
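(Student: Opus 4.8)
The plan is to exploit the second equation of system~(\ref{modelSir1}), rewritten as $\frac{dI}{dt} = I\,(bS - c)$, so that the sign of $\frac{dI}{dt}$ is governed entirely by the factor $bS - c$ (recall $I>0$ as long as the information is still present). Since the excerpt has already established that $S$ is decreasing, the quantity $bS(t)$ is itself decreasing, and therefore $bS - c$ can change sign \emph{at most once}, and only from positive to negative. This single observation is the engine of the whole argument. I would also record the elementary identity $bS(0) - c = c\,(R_e - 1)$, which follows from $b = \beta/N$ and $R_e = \frac{S(0)\beta}{Nc}$, so that the hypothesis on $R_e$ translates directly into the sign of $\frac{dI}{dt}$ at the initial instant.

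First I would treat the case $R_e \leq 1$. Then $bS(0) \leq c$, and because $S$ is non-increasing we get $bS(t) \leq bS(0) \leq c$ for all $t \geq 0$, whence $\frac{dI}{dt} = I(bS-c) \leq 0$: the informed population never grows. When $R_e = 1$ the derivative vanishes only at $t=0$ and is strictly negative immediately afterwards, since $S$ drops below $S(0)$. Thus $I$ is monotonically decreasing, and invoking the limit $I(\infty)=0$ already derived in the previous section closes this case.

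For $R_e > 1$ we have $bS(0) > c$, so $\frac{dI}{dt}$ is positive at $t=0$ and $I$ is initially increasing. The crux is to show that $S$ drops below the threshold $S^{*} := c/b$ in finite time, so that $bS - c$ genuinely changes sign. I would argue by contradiction: if $S(t) \geq S^{*}$ for all $t$, then $\frac{dI}{dt} \geq 0$ forces $I(t) \geq I(0) > 0$ for all $t$, and substituting this back into $\frac{dS}{dt} = -bIS \leq -b\,I(0)\,S^{*} < 0$ yields a fixed negative upper bound on $\frac{dS}{dt}$, driving $S \to -\infty$ and contradicting the established lower bound $S \geq 0$. Hence there is a finite $t^{*}$ with $S(t^{*}) = S^{*}$; for $t < t^{*}$ the factor $bS-c$ is positive and $I$ increases, while for $t > t^{*}$ it is negative and $I$ decreases, so $I$ attains a maximum at $t^{*}$ and then falls, again to the limit $I(\infty)=0$.

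The step I expect to be the real obstacle is precisely this last one: guaranteeing that the threshold is actually crossed in finite time rather than merely approached asymptotically. The monotonicity of $S$ and the sign bookkeeping are routine, and the content of the proof lies in ruling out the degenerate scenario in which $S$ hovers at or above $c/b$ forever, which is exactly what the contradiction via the boundedness of $S$ accomplishes. I would also double-check the boundary behaviour at $R_e = 1$ to make sure the conclusion ``decreases to $0$'' is not spoiled by the momentary vanishing of $\frac{dI}{dt}$ at the initial instant.
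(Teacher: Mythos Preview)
Your proof is correct and follows the same strategy as the paper: both use the factorisation $\frac{dI}{dt} = I(bS - c)$, the identity $bS(0) - c = c(R_e - 1)$, the monotonicity of $S$, and the previously established limit $I(\infty) = 0$. The only difference is that for $R_e > 1$ the paper simply notes that $I$ is initially increasing and that no non-zero constant $I$ can satisfy the equation, whereas you supply the explicit contradiction showing $S$ must cross the threshold $c/b$ in finite time --- your treatment is more thorough, but the underlying approach is the same.
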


As shown in figure~\ref{imggg1}, the information propagates in the networks as an epidemic-like if $R_0>1$, which means if $\beta > c$. 

\begin{proof}

From the second equation of the system~\ref{modelSir1}, we can deduce that 

$$\frac{dI}{dt} = (bS-c)I \leq (bS(0)-c)I = c(R_e-1)I\leq 0$$ 

for $R_e\leq 1$. As $I(\infty)=0$, the first result of the proposition is then obtained.

Similarly, the second equation of the system~\ref{modelSir1} implies

$$\frac{dI}{dt}(0) =  c(R_e-1)I(0)>0$$

for $R_e>1$. Thus, the function $I$ increases for $t$ approaches $0$. This equation implies also that there is no non-zero constant value of $ I $ satisfying Equation 2. These remarks and the fact that $I(\infty)=0$ complete the proof. 
\end{proof}  

\subsection{Finding the maximum number of informed nodes $I$}

Dividing the first two equations of the system of the SIR model, gives:

$$\frac{dS}{dI} = \frac{-b SI}{b SI -c I}.$$

This differential equation independent variables can be rewritten as follows:

$$\int \frac{bS-c}{bS} dS = - \int dI .$$

We find then $-I-S+\frac{c}{b} \ln{S} = k$, where k is a constant and thus $\forall t \geq 0$,

\begin{equation}
\label{eqIS}
I(t)+S(t)-\frac{c}{b} \ln{S(t)} = I(0)+S(0)-\frac{c}{b} \ln{S(0)} .
\end{equation}

The maximum number of the informed nodes $I_{max}$ is reached when $\frac{dI}{dt}=0$. According to the system~\ref{modelSir1}, this occurs when $S=\frac{c}{b}$. 
Rewriting the equation~\ref{eqIS}, we find:

$$I_{max} = I(0)+S(0)-\frac{c}{b} \ln{S(0)} - \frac{c}{b} \left(1-\ln{\frac{c}{b} }\right).$$

Particularly, assuming that $I(0)=1$ and $S(0)=N-1$, we find 

$$I_{max} = N - \frac{N}{R_0}\left( 1+ \ln{R_0}\right).$$

Moreover, from equation~\ref{eqIS} one can deduce that, in the plane $S-I$, the solutions $(S(t),I(t))$ are in the contour lines of the function $F(S,I)=S+I-\frac{c}{b} \ln {S}$ as shown in figure~\ref{imggg1}   

\begin{figure}[ht]
\centering
\includegraphics[scale=0.6]{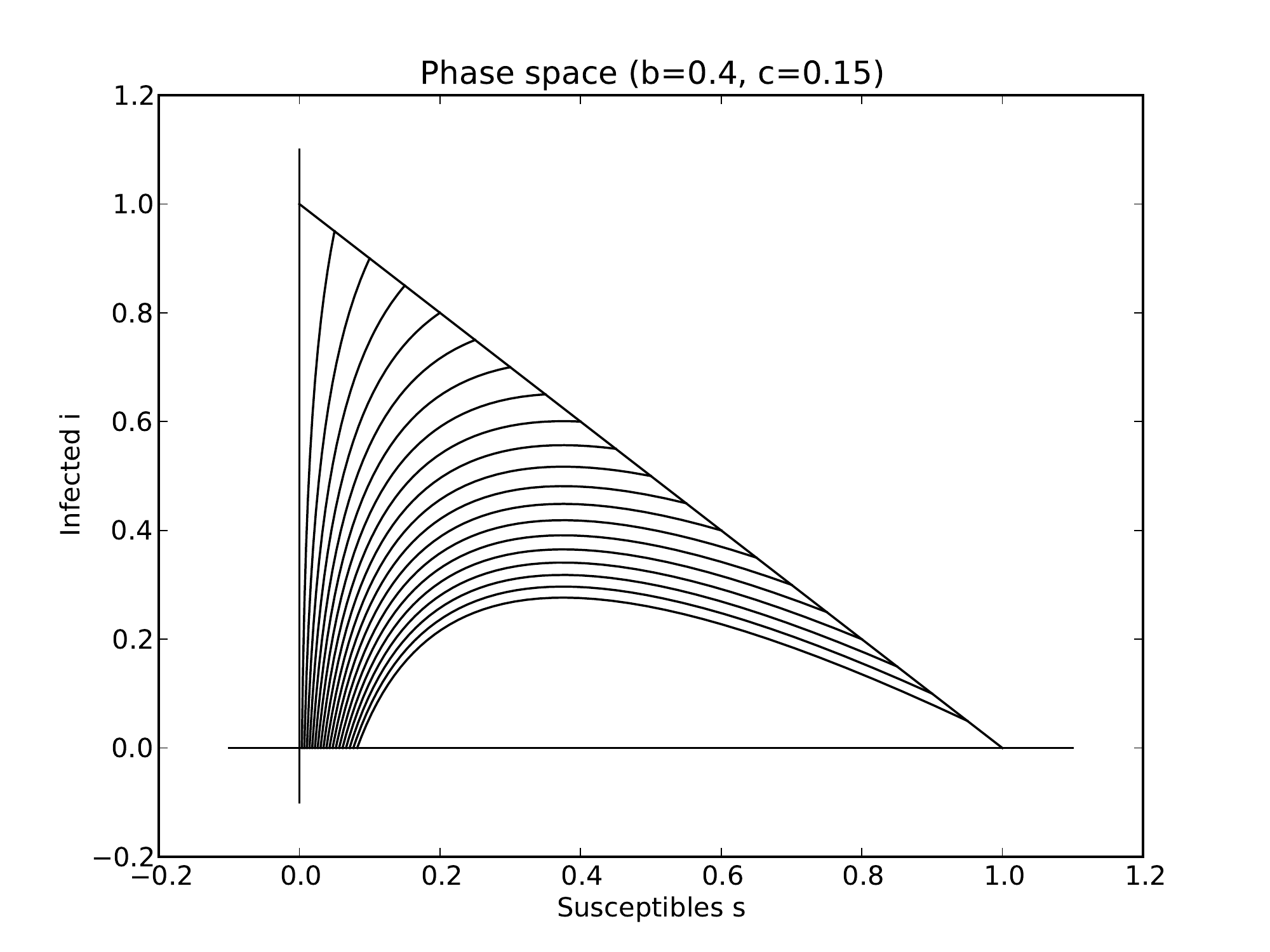}
\caption{Phase space $(S,I)$ with $b=0,4, c=0,15$ (SIR Model).}
\label{imggg1}
\end{figure}

\subsection{Stopping the information transmission}

In the context of data survivability for IoT, a question could be asked about the stopping of the information transmission because there are no more nodes susceptible to receive it (which means $S(\infty)=0$). The next proposition proves that this case cannot happen.

\begin{proposition}

The minimum number of nodes capable of receiving information, satisfies the following inequality: 

 $$S(\infty) \geq S(0) \exp(-R_0).$$

In particular, this limit is strictly positive.
\end{proposition}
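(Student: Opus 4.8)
The plan is to eliminate both the variable $I$ and the time $t$ by combining the first and third equations of the system~\ref{modelSir1}, so as to express $S$ directly as a function of $R$. First I would observe that $\frac{dS}{dt} = -bIS$ can be rewritten, using $\frac{dR}{dt} = cI$, as $\frac{dS}{dt} = -\frac{b}{c}S\frac{dR}{dt}$. Dividing by $S$, which stays strictly positive along the trajectory since $S(0)>0$ and $S$ cannot reach $0$ in finite time, gives $\frac{d}{dt}\ln S = -\frac{b}{c}\frac{dR}{dt}$.

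Next I would integrate this identity from $0$ to $t$, obtaining $\ln S(t) - \ln S(0) = -\frac{b}{c}\left(R(t) - R(0)\right)$, equivalently $S(t) = S(0)\exp\!\left(-\frac{b}{c}\left(R(t)-R(0)\right)\right)$. This is the key closed-form relation: it shows that $S$ is determined by $R$ alone, independently of the precise dynamics of $I$.

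Then I would pass to the limit $t \to \infty$. Since it has already been established that $R$ converges to a finite value $R(\infty)$ and that $R(t) \leq N$ for all $t$, we have $0 \leq R(\infty) - R(0) \leq N$. Substituting into the exponential and using $b = \frac{\beta}{N}$ together with $R_0 = \frac{\beta}{c}$, the exponent satisfies $-\frac{b}{c}\left(R(\infty)-R(0)\right) \geq -\frac{bN}{c} = -R_0$, which yields $S(\infty) \geq S(0)\exp(-R_0)$. The strict positivity of the limit follows at once, since $S(0)>0$ and $\exp(-R_0)>0$.

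I do not anticipate a serious obstacle here, as the argument is essentially a separation-of-variables trick applied to the ratio $dS/dR$. The only points that genuinely require care are (i) justifying that $S(t)>0$ throughout, so that $\ln S$ is well defined along the whole trajectory, and (ii) using the sharp bound $R(\infty)-R(0)\leq N$ rather than a cruder estimate, so that the resulting exponent matches $-R_0$ exactly.
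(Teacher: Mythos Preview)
Your proposal is correct and follows essentially the same route as the paper: both eliminate $I$ by combining $\frac{dS}{dt}=-bIS$ with $\frac{dR}{dt}=cI$ to obtain $S(t)=S(0)\exp\!\left(-\frac{b}{c}(R(t)-R(0))\right)$, and then bound $R(t)-R(0)\leq N$ to conclude $S(\infty)\geq S(0)\exp(-R_0)$. Your version is in fact slightly more careful in justifying the positivity of $S$ along the trajectory, which the paper leaves implicit.
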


\begin{proof}

We proceed as in the previous proposition, by dividing now the first to the third equation in the system~\ref{modelSir1} as follows:

$$\frac{dS}{dR} = \frac{-bIS}{cI} = \frac{-bS}{c}.$$

We find again, a differential equation with independent variables which can be rewritten as:

$$\int \frac{dS}{S} = \int \frac{-b}{c} dR,$$

Which implies $S(t)= S(0) \exp\left(-b\frac{R(t)-R(0)}{c}\right)$.

Since $0 \leq R(t)-R(0) \leq N$, then $S(t)\geq S(0) \exp \left( \frac{-bN}{c} \right)$, and thus $$S(\infty) \geq S(0) \exp \left( -\frac{bN}{c} \right) = S(0) \exp (-R_0) >0.$$

Noting that $S$ is actually an integer, then for the high value of $R_0$, we find that
$S(\infty) = 0$: in this extreme case, all nodes will receive the information at a given time. 
\end{proof}

In the case when an epidemic propagation of the information occurs in the network, the number of the nodes susceptible decreases. 
Thus the rate of appearance of new nodes informed also decreases, and at a time $t$ it may happen that $S(t)$ is less than $\frac{c}{b}$: the rate of
the nodes having lost the information exceeds the rate of the informed nodes, this explains why $I(t)$ starts decreasing. The transmission of information
will then stop in the network due to a lack of informed nodes (and not because of the number of susceptible nodes). 

\section{Variations of SIR applied to the IoT}
\label{VSIR}
\subsection{Another understandings for the Recovered compartment}

In the previous section, the $R$ compartment was constituted by 
nodes that have been compromised by the attacker, which will
be referred in what follows as situation 1.
It is possible to attribute at least two other understandings to
this compartment, for nodes in the IoT based on a wireless sensor network 
whose lifetime is dependent on energy consumption and in absence of attacks.

This compartment can be constituted by dead nodes, when considering 
that the sole action on the energy is the information transmission, and
that the unique way to die for a node is to have transmitted too many data. In other
words, in this Situation~2, the nodes send information messages to their neighbors until totally emptying
 their batteries. The user will receive the information when it interrogates the network at time $t$ if $I(t) \neq 0$.

A third situation can be considered without any changes in formalization,
except redefining the meaning of the $R$ compartment. Indeed, it can be
interesting to consider that a node is first susceptible to receive an information message
for a while, then once the message has been received it can be transmitted, before 
finally entering into the third age of its life, the recovered state in
which it will lose its ability to transmit the information. 

However, in many situations of the IoT, the energy consumption and the death of the nodes can not be neglected. This is why a natural extinction rate of nodes for each compartment $S$, $I$ and $R$ will be introduced in the next section.

\subsection{A SIR model for the IoT with a natural death rate}
\label{USIR1}

\begin{figure}[ht]
\centering
\subfigure[Situation 2]{\label{SIR2sit2}
\includegraphics[scale=0.65]{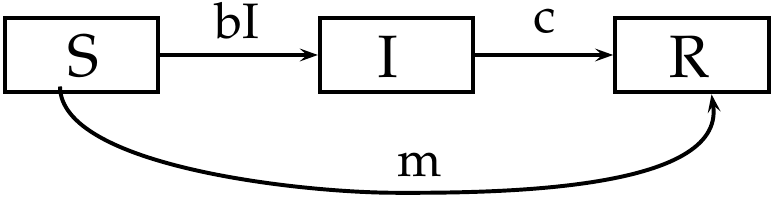}}\hspace{1cm}
\subfigure[Situations 1 and 3]{\label{SIR2sit13}
\includegraphics[scale=0.65]{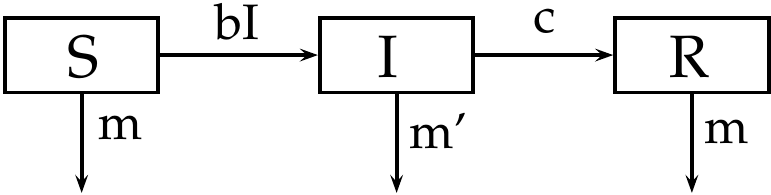}} 
\caption{SIR models with natural death rate}
\label{SIR2modeldeath}
\end{figure}

The previous section considers that all nodes activities are negligible, in terms of 
energy, except the transmission of information in Situations 2 and 3, which is reasonable in a
first approximation. It is however possible to refine the SIR model in these two last situations, in order to 
consider that the nodes' energy decreases too in absence of information transmission.

\begin{figure}[ht]
\centering
\subfigure[Situation 2]{\includegraphics[scale=0.33]{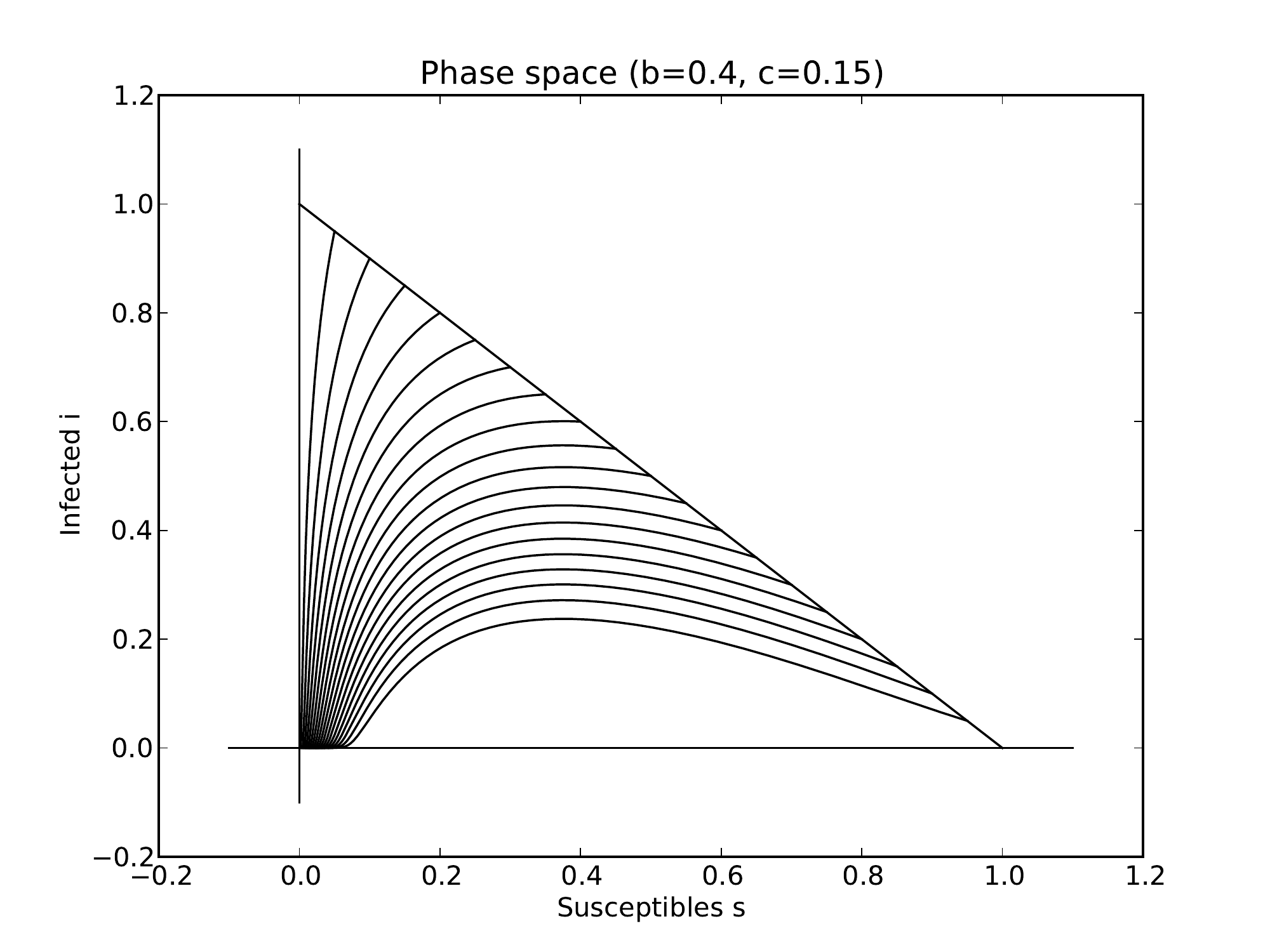}}
\subfigure[Situations 1 and 3]{\includegraphics[scale=0.33]{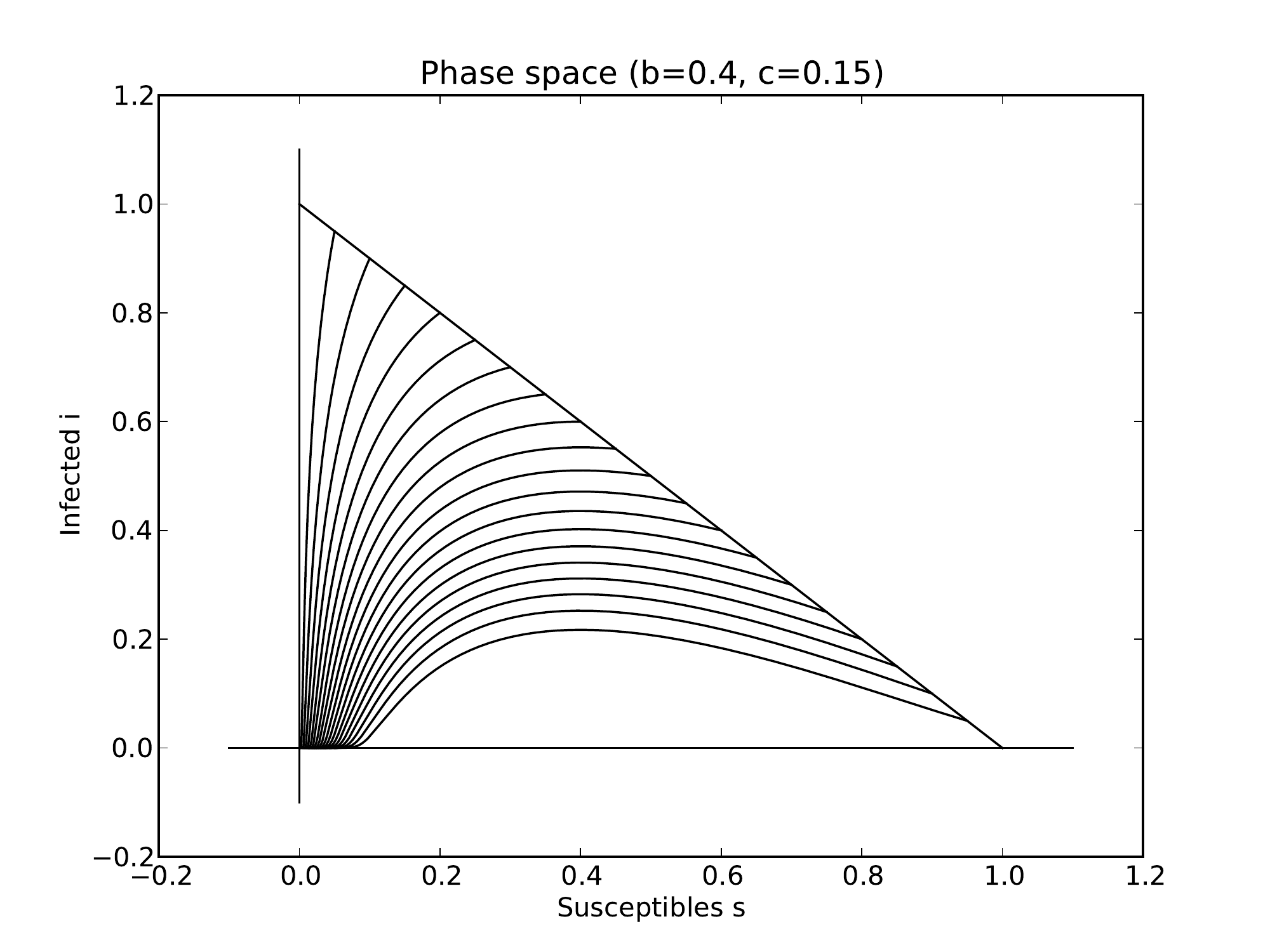}}
\caption{Phase space $(S,I)$ with $b=0.4, c=0.15, m=0.01$, SIR model with natural death rate in Situation 3.}
\label{img1}
\end{figure}

In Situation 2, the $R$ compartment of the SIR model is constituted by 
dead nodes. This compartment is populated by susceptible nodes that have
naturally died (death rate $m$) without having received the datum and by
nodes of the $I$ compartment which die at another rate $c$ supposed
to be greater than $m$, as they have to transfer the datum, an energy-consuming
task. This situation is depicted in Figure~\ref{SIR2sit2}.

\begin{figure}[ht]
\centering
\includegraphics[scale=0.6]{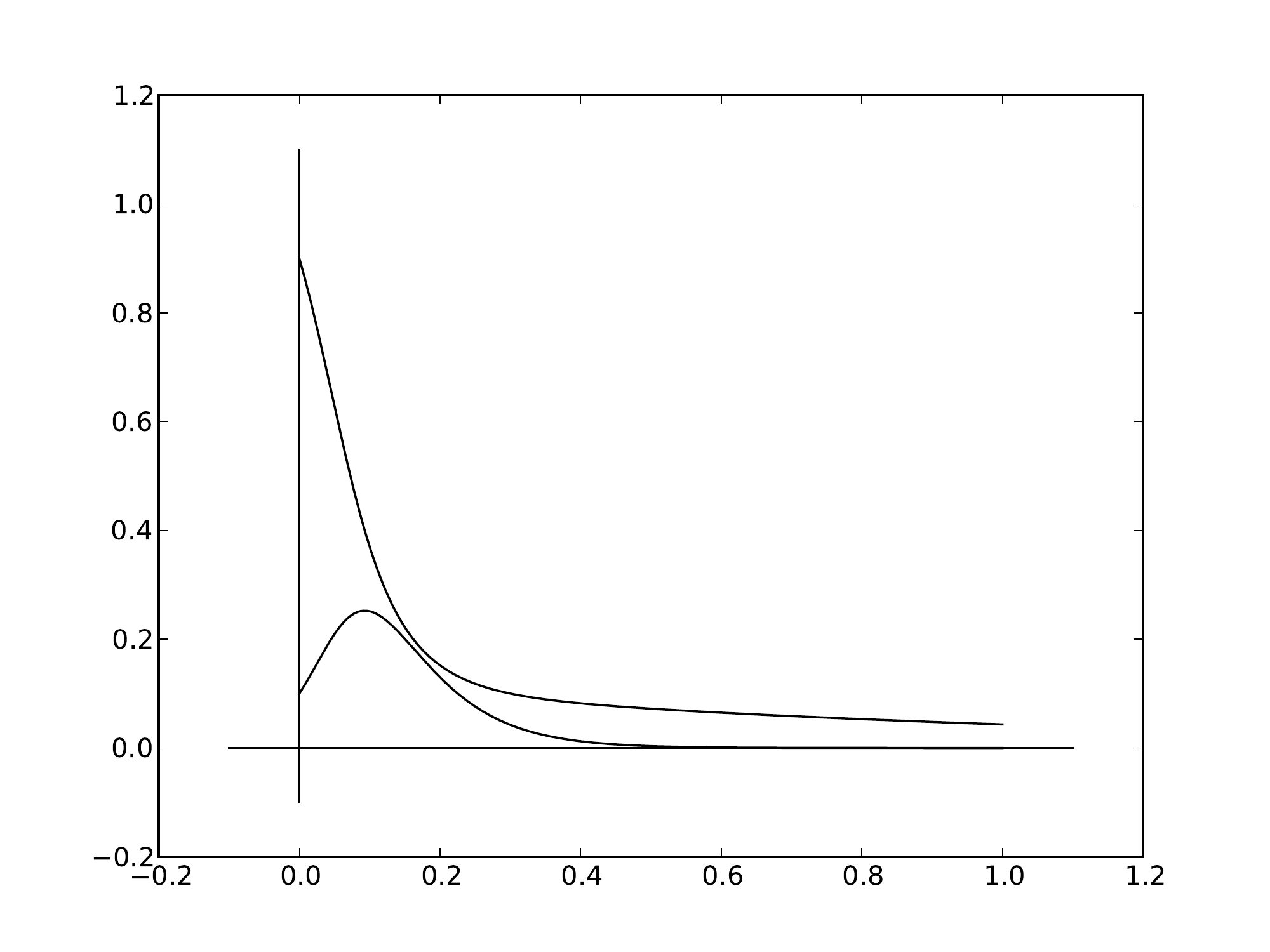}
\caption{Evolution of the fractions $S$ and $I$ of susceptible and having the datum 
nodes with $b=0.4, c=0.15, m=0.01, s(0)=0.9,$ and $i(0)=0.1$, SIR model with natural nodes death rate in Situations 1 and 3.}
\label{img2}
\end{figure}

In the two other situations investigated in this research work, the $R$ 
compartment is constituted by living nodes that do not transmit the
datum anymore, either because they have been corrupted and thus have 
lost it (first situation), or because their batteries are preserved (third one).
This new situation is closed to the SIR model of Figure~\ref{SIRmodel}, except
that a new network is characterized by a death rate for each nodes
compartment (see Figure~\ref{SIR2sit13}). 
Notice that the death rate $m'$ of the $I$ compartment is \emph{a priori} different
from the one of $S$ and $R$ compartments, as it is reasonable to suppose that
the datum transmission implies more energy consumption. 

The SIR model of Equation~\ref{modelSir1} can be adapted as follows for Situation 2:
\begin{equation}
\label{modelSir2}
\left\{
\begin{array}{l}
\frac{dS}{dt} = - b I S -m S\\\\
\frac{dI}{dt} = b I S - c I\\\\
\frac{dR}{dt} = c I+mS,\\
\end{array}
\right.
\end{equation}
while it has the following form in Situations 1 and 3:
\begin{equation}
\label{eqSit3}
\left\{
\begin{array}{l}
\frac{dS}{dt} = - b I S -m S\\\\
\frac{dI}{dt} = b i s - c I - m' I\\\\
\frac{dR}{dt} = c I - mR.\\
\end{array}
\right.
\end{equation}

Numerical simulations are used to show the long-term behavior of the system under such assumptions. The phase space for the three situations
are detailed in figure~\ref{img1} and in figure~\ref{img2}. We show the evolution of susceptible $S$ and having the datum $I$ 
nodes for situations 1 and 3.

To put it in a nutshell, to achieve data survivability in 
the IoT, the birth of connected nodes must be considered, which 
is the subject of the next subsection.

\subsection{Achieving data survivability using connected/birth and disconnected/death rates}

\begin{figure}[!h]
\centering
\includegraphics[scale=0.8]{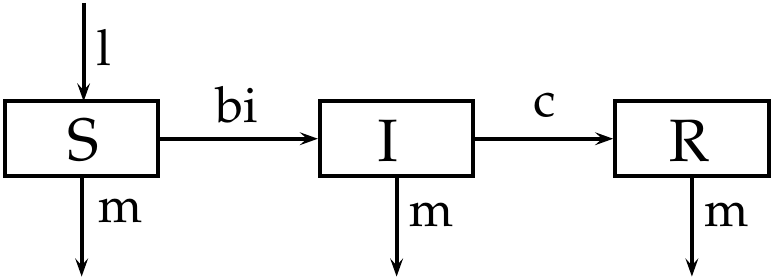}
\caption{SIR model with natural connected/birth and disconnected/death rates}
\label{SIR3modelbirthdeath}
\end{figure}

Considering now a new SIR model that leads to the division of each compartment in two parts, corresponding
respectively to connected and disconnected nodes. Indeed, in the IoT, nodes or things can be scheduled to be connected/awaken or not in order to save energy and extend the nodes lifetime. Therefore, a scheduling process of the nodes can be established where each node periodically decides to be in active/connected or sleep/disconnected mode. We suppose that, initially, a small part of the nodes is connected.
New nodes are then connected periodically during the network's service at a rate $l$, repopulating by doing so the $S$
compartment. Along with this birth rate, a natural death rate $m$ is considered 
for each of the three kind of nodes, while the $R$ compartment is for corrupted 
nodes in the original situation 1, as depicted in Figure~\ref{SIR3modelbirthdeath}. 
Notice that such a model is compatible with living and awaken/connected
nodes that have stopped to transfer the information in Situation 3.

To model such a scenario requires to rewrite the first line of 
Equation~\ref{modelSir2}, leading to the following system:
\begin{equation}
\label{modelSir3}
\left\{
\begin{array}{l}
\frac{dS}{dt} = l - b I S -m S\\\\
\frac{dI}{dt} = b I S - c I - m I\\\\
\frac{dR}{dt} = c I - m R.\\
\end{array}
\right.
\end{equation} 

Our objective is to know if the information will disappear from the network in the short or long term, or if it will remain available in the network, like the epidemy. The solutions in the long term, which depend on the nature of the equilibrium point, can be studied for this purpose. It leads to the phases diagram in Figure~\ref{img2123}. 

It can be shown that in this case, there is a set of parameters of the IoT for which the number of the informed nodes is strictly positive at an equilibrium point. Thus, if other solutions approach to equilibrium, the number of the informed nodes will remain strictly positive, and the information will remain in the network and become ``endemic''. We can then define, as above, a $R_0$ depending on the parameters of the system, such as the situation is endemic if and only if $R_0 >1$.

\begin{figure}[ht]
\centering
\includegraphics[scale=0.5]{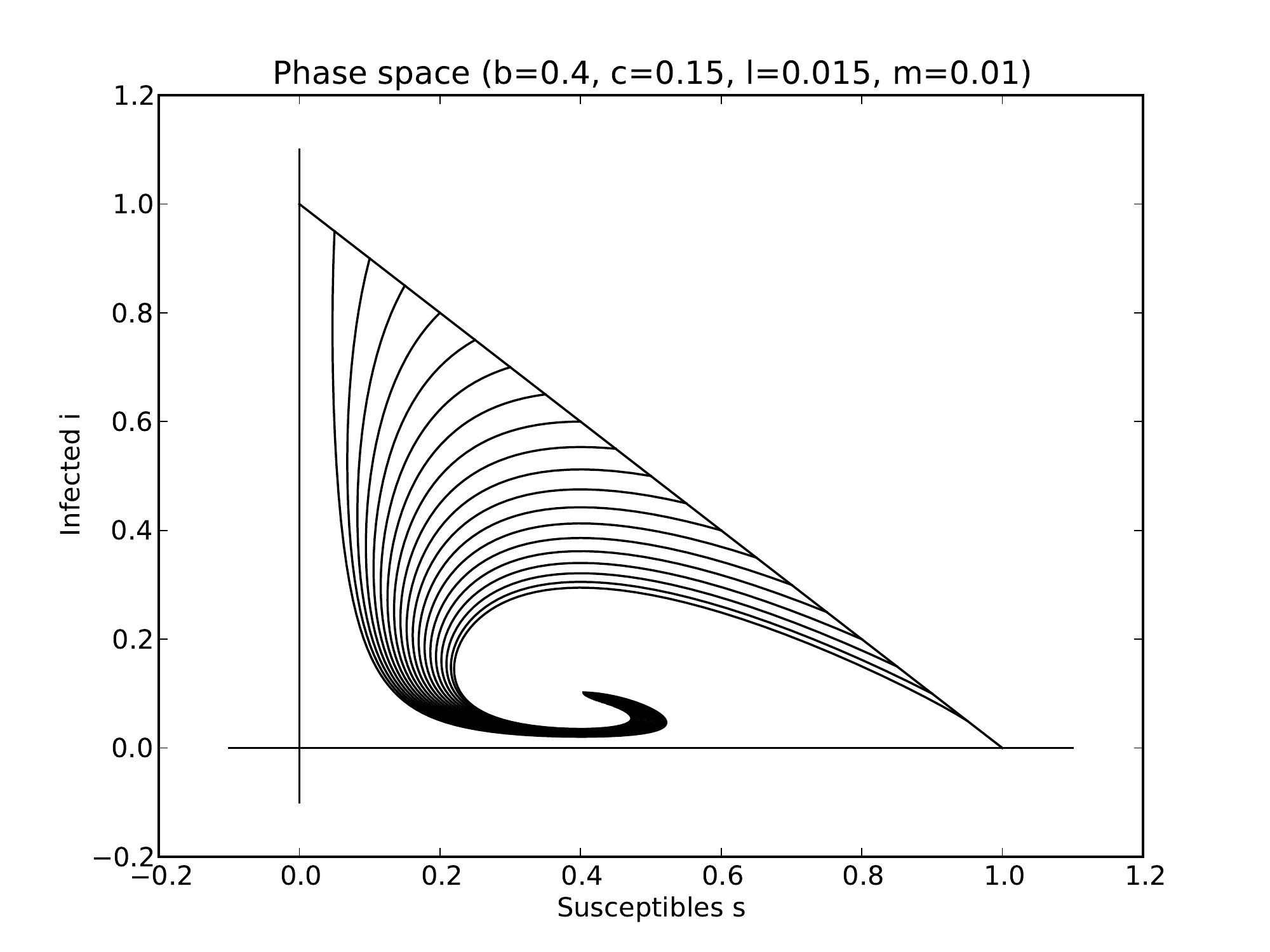}
\caption{Evolution of the fractions $s$ and $i$ of susceptible and having the datum 
nodes, SIR model with natural birth and death rates ($R_0=3.75$).}
\label{img2123}
\end{figure}

The attacker wish is to have $R_0<1$ tending to an information-free equilibrium,
whereas $R_0$ must be greater than 1 for the end user to face such an attack. If the
attacker has the opportunity to observe the network running a certain duration,
then he or she can infer the values of parameters $b, c, m$, and $l$. 
Let $N$ be the number of data transmissions by one
informed node per time unit, that is, $N=\frac{bl}{m}$. If the attacker
is able to detect and infect the informed nodes in a time $\frac{1}{c+m}$ lower than $\frac{1}{N}$,
then he or she is sure that $R_0<1$: the data will not survive in the network.
The user's interest, for its part, is to have $\frac{bl}{m}$ large and $\frac{1}{c+m}$ low, which means $R_0<1$, 
which can be achieved in the following manner:

\begin{itemize}
\item increasing the birth rate $l$,
\item increasing the lifetime of nodes to reduce $m$,
\item increasing the data transmission rate $b$, but $m$ increases when $b$ increases,
\item if possible, reducing $c$ by considering countermeasures against data removal.
\end{itemize}

\section{Numerical Simulations}

Let us now verify experimentally the efficiency of the proposal. 
A network of 10,000 interconnected nodes has been modelled using a simulator written in Python
language. It can handle 3 states for each node, which will be called here Susceptible, Infected, or Recovered. We have computed the nodes behavior
and capabilities according to what has been presented and explained previously in the theoretical part
of this article. 90\% of the nodes are initially in the $S$ compartment,
while 10\% of the network is set to Infected at time $t=0$. 
The numerical simulator is then launched during 30 time units.

\begin{figure}[h]
\subfigure[$R_0=0.001111$]{\includegraphics[scale=0.35]{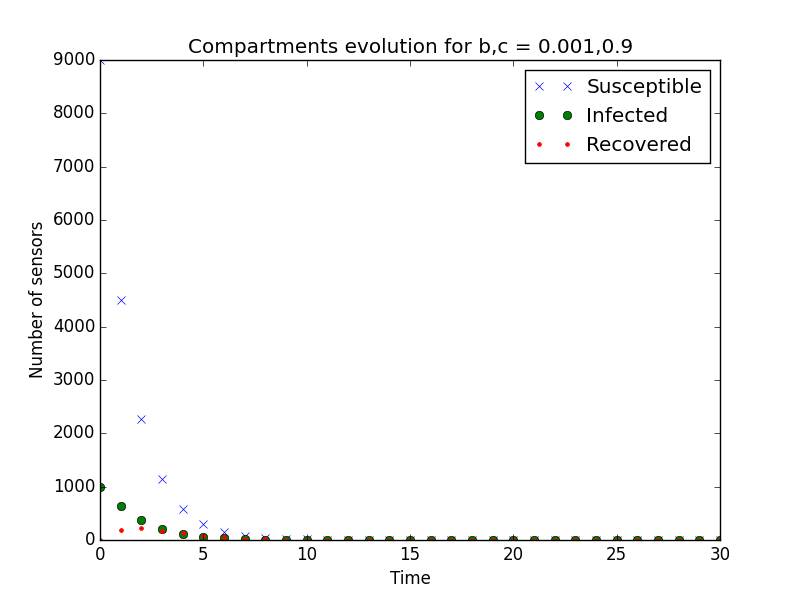}}
\subfigure[$R_0=4.9995$]{\includegraphics[scale=0.35]{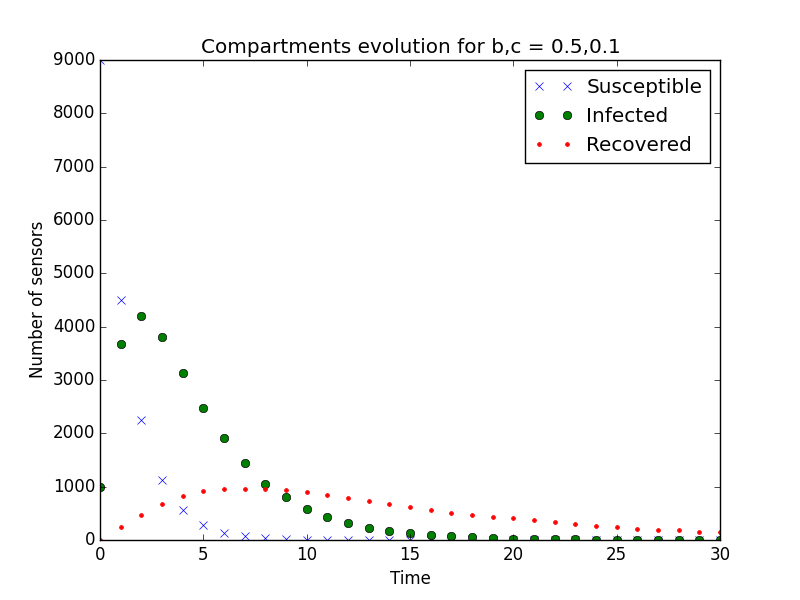}}
\caption{Simulations of Eq.~\ref{modelSir1} model and illustration of Prop.~\ref{prop1}}
\label{simul1}
\end{figure}

In all 4 experiments, nodes from the $S$ compartment can be infected following a probability $bI$,
which depends on the number of infected nodes, while the nodes in the $I$ compartment 
move to the $R$ compartment according to a $c$ probability. $b$ and $c$ are respectively
set to $(0.001, 0.9), (0.5, 0.1), (0.2, 0.15),$ and $(0.23, 0.01)$ in the $4$ experiments,
to see the infection rates effects on the network evolution.
In the first two experiments we simply consider the SIR model described
in Figure~\ref{SIRmodel}, while in the two last simulations we respectively add
one and two death rates $m$ and $m'$, corresponding to the probability for a
given node to have emptied its battery.

The objective of the first two simulations
is to illustrate experimentally the effects of the constant $R_0$ on the evolution of the
number of infected nodes, what has proven in Proposition~\ref{prop1}. It can be seen
in Figure~\ref{simul1} that the $I$ compartment has a maximum for $R_0>1$, while
it is not the case if $R_0<1$. Figure~\ref{simul2}, for its part,
illustrates the evolution of the network when considering death rates 
in the 3 variations of SIR model described in Section~\ref{VSIR}.
We can see that the $I$ 
compartment is never empty, leading to a data survivability in this IoT, while
phase spaces and compartment evolution theoretically deduced in a previous 
section can be experimentally obtained too.

\begin{figure}[h]
\subfigure[Situation 2]{\includegraphics[scale=0.35]{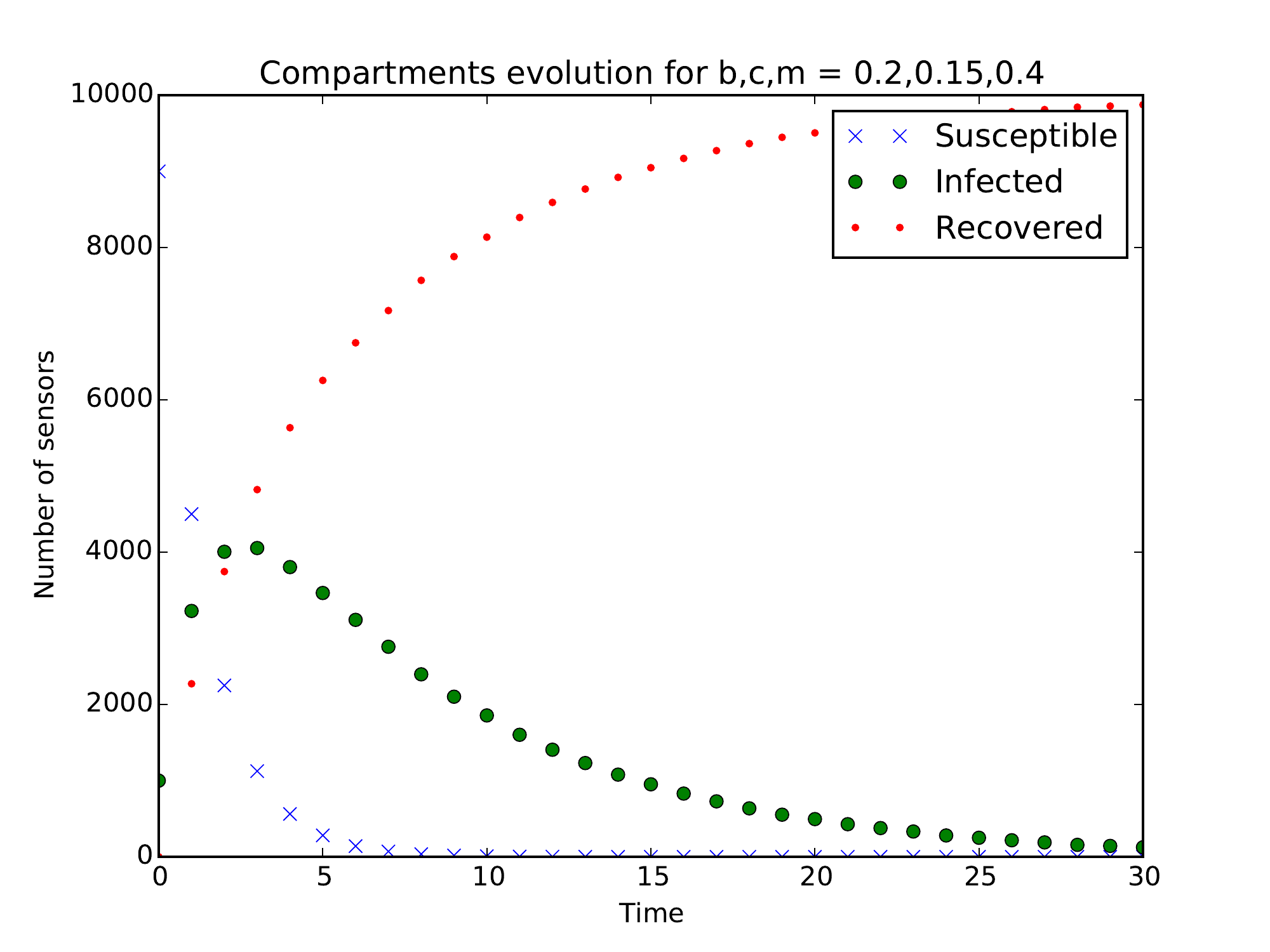}}
\subfigure[Situations 1 and 3]{\includegraphics[scale=0.35]{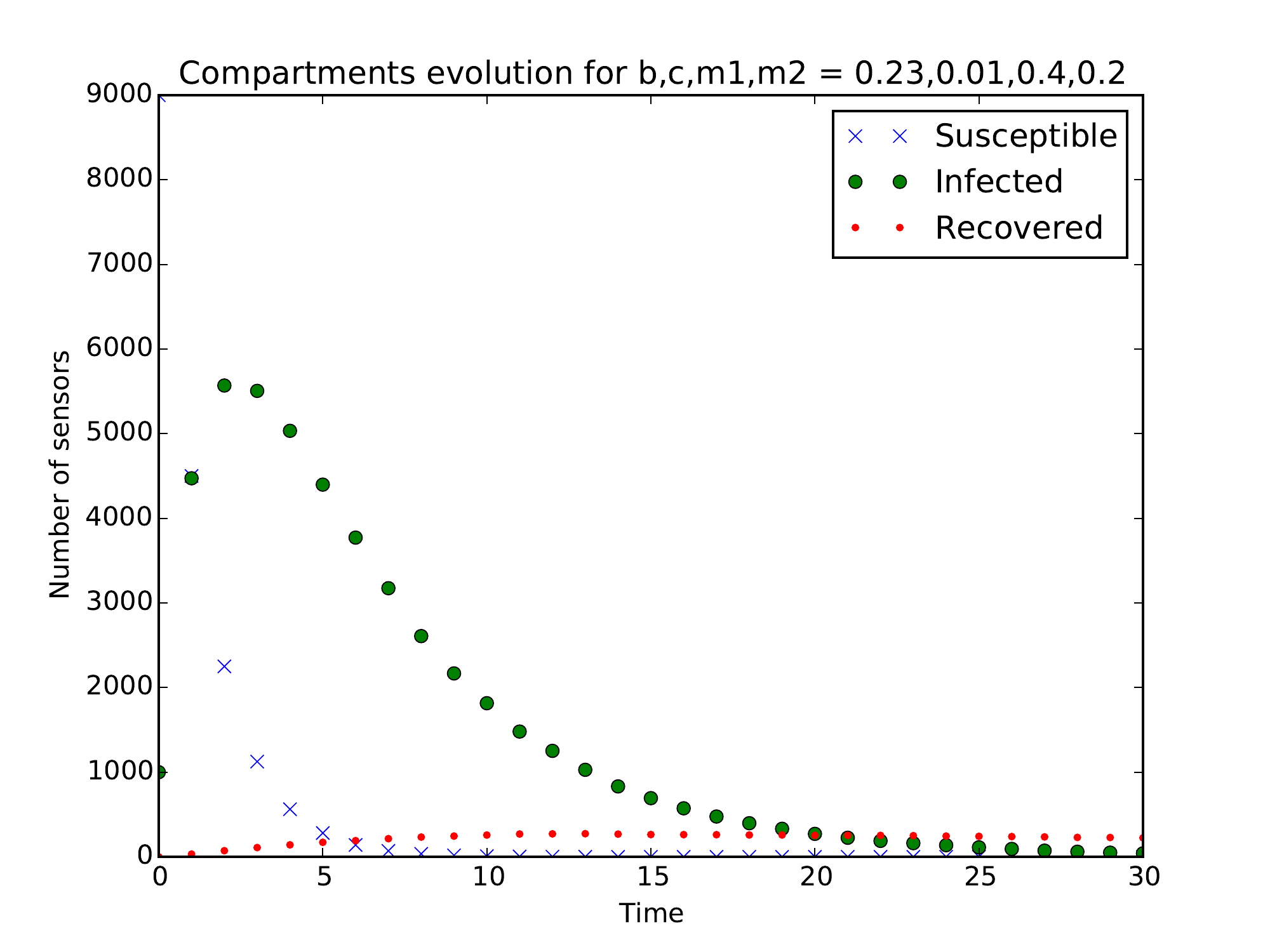}}
\caption{Compartment evolution in the 3 variations of SIR model described in Section~\ref{VSIR}.}
\label{simul2}
\end{figure}

For the sake of comparison, we have regarded which results can be obtained with a situation where 
each informed node send the data to its $k$ neighbors as soon as it receives it, 
for various $k$. The trade off in that situation is as follows: increasing the $k$ value
leads to a better data survivability, but the network's lifetime is consequentially reduced
accordingly. This obvious fact is illustrated in Figure~\ref{simulk}, in which the number of 
informed nodes increases more rapidly with a larger $k$, while the death rate is 
increasing too. Remark that one objective of the theoretical study presented previously 
were indeed to find the best $k$ in such a situation.

\begin{figure}[h]
\subfigure[$k=1$]{\includegraphics[scale=0.35]{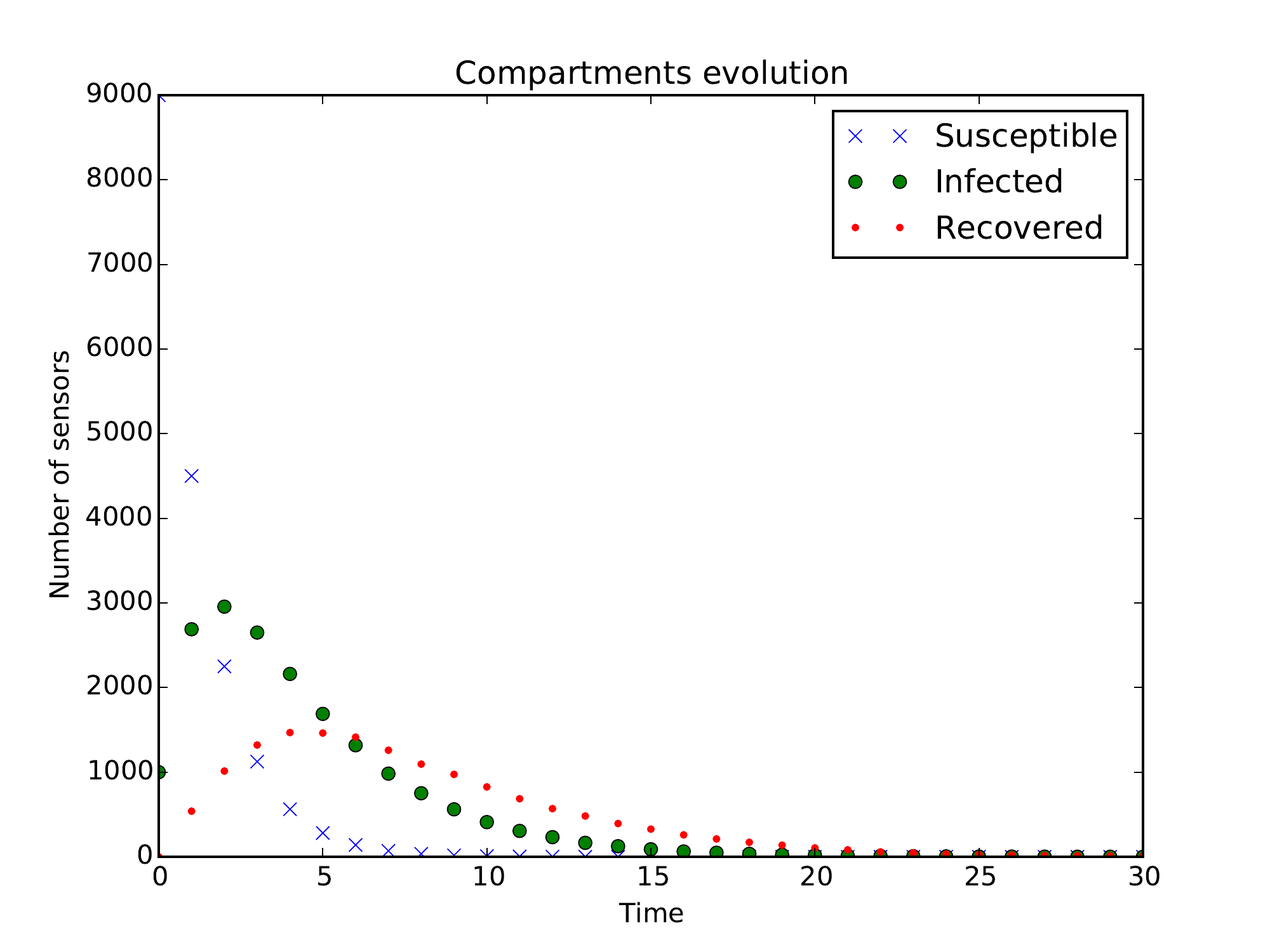}}
\subfigure[$k=4$]{\includegraphics[scale=0.35]{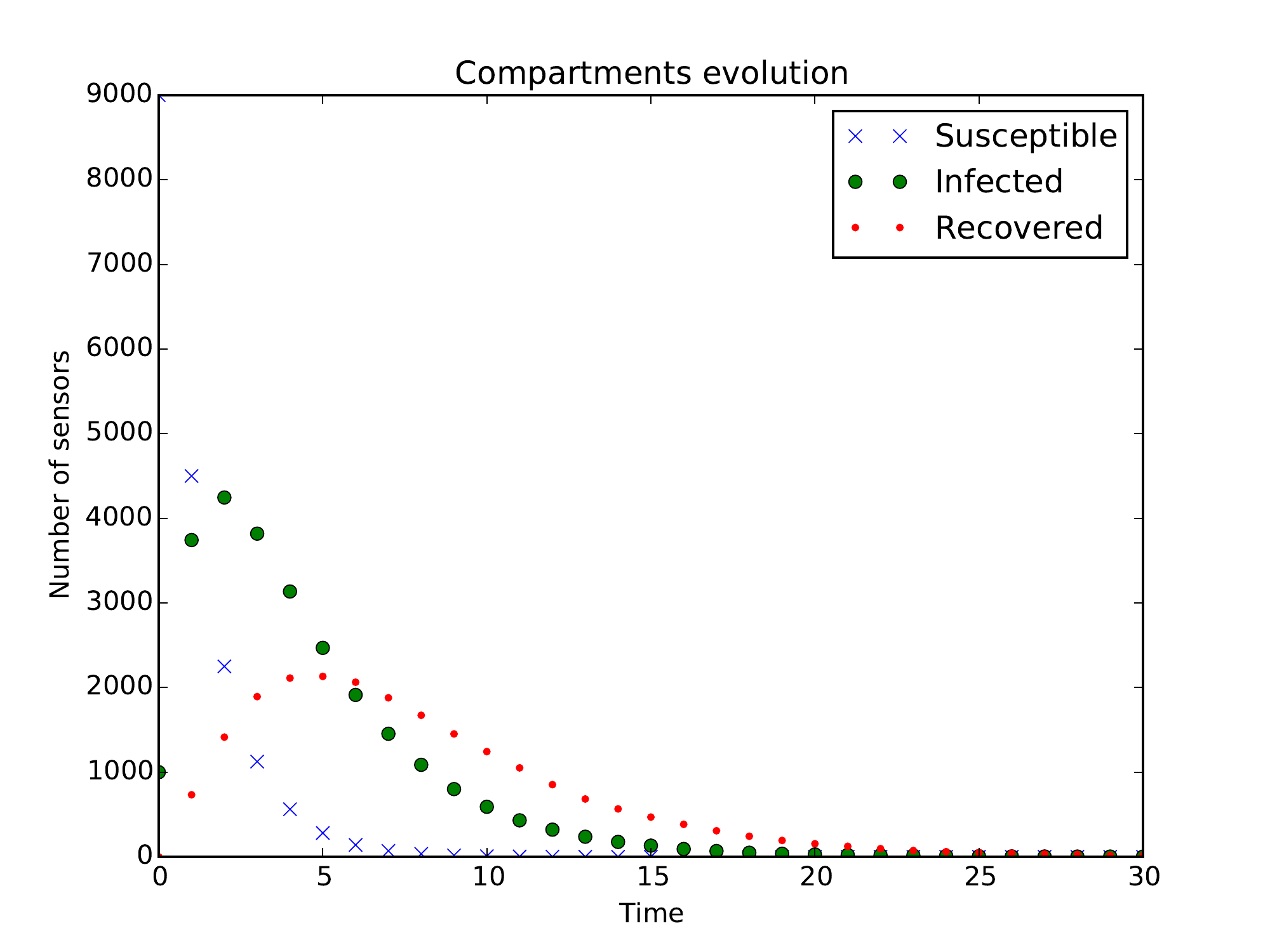}}
\caption{Compartment evolution when each informed node send the data to its $k$ neighbors.}
\label{simulk}
\end{figure}

We investigated too a random approach, in which each node picks randomly both the number of 
closest neighbors it must transfer the information and in how much iterations it must achieve
the transfer. Obtained results are presented in Figure~\ref{figRandom}. In that situation we observed that,
even though the data survivability and network's lifetime are locally heterogeneous, behaviors are
averaged on the whole network, leading to a global result that does not outperform what we obtained
when computing the optimal $k$ in a previous experiment. Finally, in our last experiment, the 
number of data transmission became inversely proportional to the battery level of the associated sensor.
As can be seen in Figure~\ref{figAuto}, data survivability is improved during the first iterations, but this 
latter decreases dramatically over time. This is not surprising, as the data transmission decreases
accordingly.

\begin{figure}[h]
\subfigure[Random approach\label{figRandom}]{\includegraphics[scale=0.35]{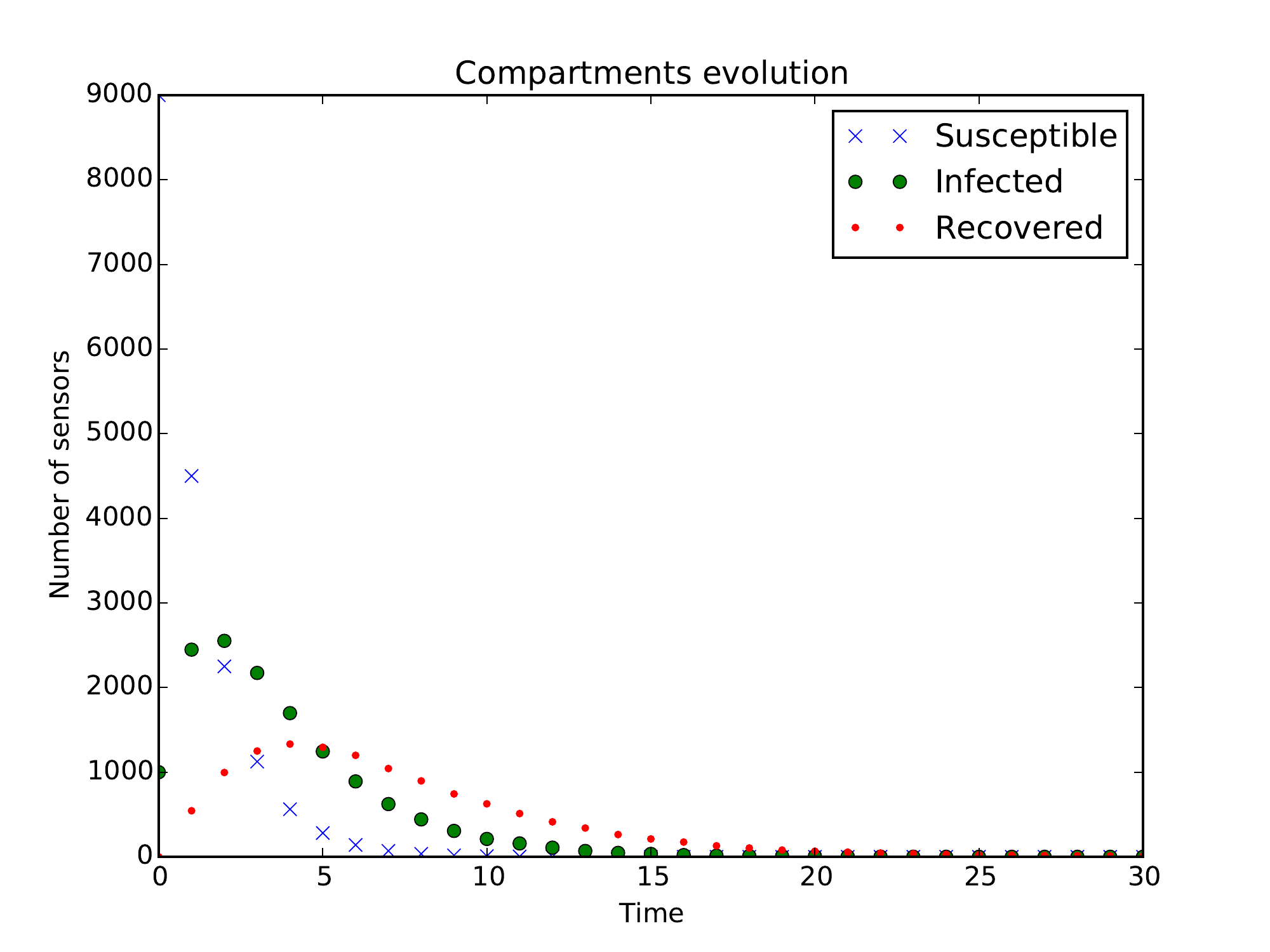}}
\subfigure[Transmission inversely proportional to the battery\label{figAuto}]{\includegraphics[scale=0.35]{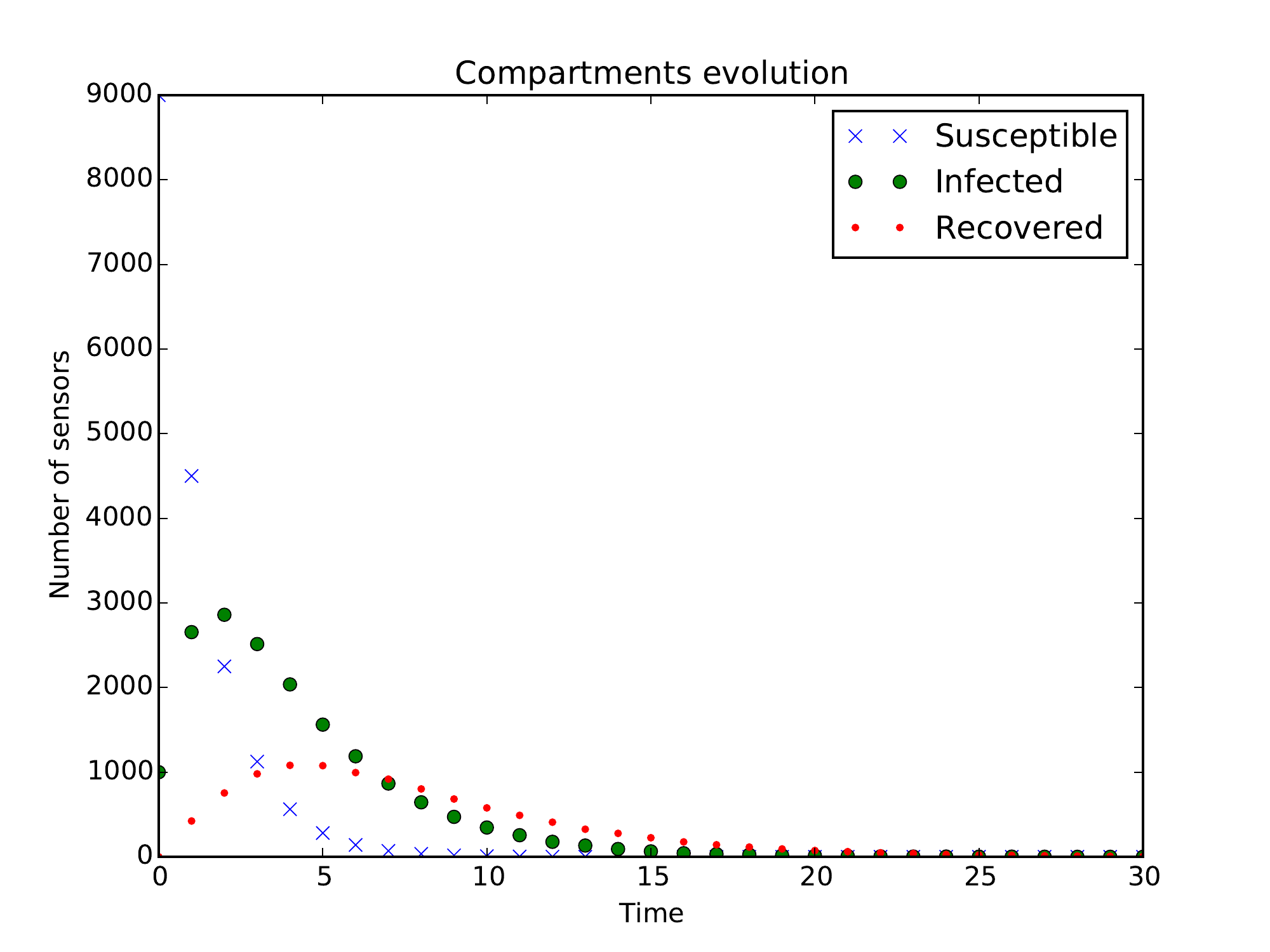}}
\caption{Compartment evolution for heterogeneous transmission rates.}
\end{figure}

\section{Conclusion}
\label{CONC}

Data survivability is a main challenge of the internet of things that can not be ignored. Therefore, collaborating and the transmitting crucial information between nodes are to be realized in order to maximize the amount of monitoring-related data that can survive. This paper presented an efficient technique that uses epidemic domain models in the context of data survival in the IoT. We modelled the collaboration and transmission between nodes via a SIR (Susceptible - Infected - Recovered) model that can ensure the survivability of the datum in presence of different types of attacks. We showed that our method is well adapted to IoT applications based on wireless sensor networks with energy and resource constraints. On the other hand, it takes into account the dynamic network topology which is a non-negligible constraint. The introduced model which we prove its viability for IoT paves the way for further investigations in the IoT domain. For instance, assessing how these
results are influenced by nodes or things mobility. Furthermore, in order to evaluate the efficiency of the proposed technique, real experiments are
planned for the future.

\section*{Acknowledgements}
This work is partially funded by the Labex ACTION program (contract ANR-11-LABX-01-01).

\bibliographystyle{plain}
\bibliography{biblio}

\begin{thebibliography}{10}

\bibitem{41}
M.E. Alexander, S.M. Moghadas, P.~Rohani, and A.R. Summers.
\newblock Modeling the effect of a booster vaccination on disease epidemiology.
\newblock {\em J. Math. Biol}, 52:290--306, 2006.

\bibitem{20}
R.M. Anderson and R.M. May.
\newblock Population biology of infectious disease.
\newblock {\em I Nature}, 180:361–367, 1999.

\bibitem{19}
Luigi Atzori, Antonio Iera, and Giacomo Morabito.
\newblock The internet of things: A survey.
\newblock {\em Computer networks}, 54(15):2787--2805, 2010.

\bibitem{40}
T.~Chen and N.~Jamil.
\newblock Effectiveness of quarantine in malicious codes epidemic.
\newblock {\em IEEE International Conference on Communications (ICC)}, page
  2142–2147, 2006.

\bibitem{DiPietro08}
Roberto Di~Pietro, Luigi~V. Mancini, Claudio Soriente, Angelo Spognardi, and
  Gene Tsudik.
\newblock Catch me (if you can): Data survival in unattended sensor networks.
\newblock pages 185--194, 2008.

\bibitem{DiPietro11}
Roberto Di~Pietro and Nino~Vincenzo Verde.
\newblock Epidemic data survivability in unattended wireless sensor networks.
\newblock pages 11--22, 2011.

\bibitem{38}
M.~Draief, A.~Ganesh, and L.~Massouili.
\newblock Thresholds for virus spread on network.
\newblock {\em Ann. Appl. Probab}, 18(2):359--369, 2008.

\bibitem{31}
Chenquan Gan, Xiaofan Yang, Wanping Liu, Qingyi Zhu, Jian Jin, and Li~He.
\newblock Propagation of computer virus both across the internet and external
  computers: A complex-network approach.
\newblock {\em Communications in Nonlinear Science and Numerical Simulation},
  19(8):2785--2792, 2014.

\bibitem{35}
Erol Gelenbe, Varol Kaptan, and Yu~Wang.
\newblock Biological metaphors for agent behavior.
\newblock {\em 19th International Symposium Computer and Information Sciences
  – ISCIS 2004}, 3280:667--675, 2004.

\bibitem{34}
M.J. Keeling and K.T.D. Eames.
\newblock Network and epidemic models.
\newblock {\em J. R. Soc. Interface}, 2(4):295--307, 2005.

\bibitem{Kermack27}
W.~O. Kermack and Ag~McKendrick.
\newblock {A Contribution to the Mathematical Theory of Epidemics}.
\newblock {\em Proceedings of the Royal Society of London. Series A, Containing
  Papers of a Mathematical and Physical Character}, 115(772):700--721, August
  1927.

\bibitem{32}
W.O. Kermack and A.G. McKendrick.
\newblock Contributions of mathematical theory to epidemics.
\newblock {\em Proc. Royal Soc. London – Series A 141}, pages 94--122, 1933.

\bibitem{43}
Bimal~Kumar Mishra and Samir~Kumar Pandey.
\newblock Dynamic model of worm propagation in computer network.
\newblock {\em Applied Mathematical Modelling}, 38(7-8):2173--2179, 2014.

\bibitem{39}
B.K. Mishra and N.~Jha.
\newblock Seiqrs model for the transmission of malicious objects in computer
  network.
\newblock {\em Appl. Math. Model}, 34:710--715, 2010.

\bibitem{37}
B.K. Mishra and S.K. Pandey.
\newblock Dynamic model of worms with vertical transmission in computer
  network.
\newblock {\em Appl. Math. Comput}, 217(21):8438--8446, 2011.

\bibitem{36}
M.E.J. Newman, S.~Forrest, and J.Balthrop.
\newblock Email networks and the spread of computer virus.
\newblock {\em Phys. Rev. E}, 66:232–369, 2002.

\bibitem{DiPietro13}
Roberto~Di Pietro and Nino~Vincenzo Verde.
\newblock Epidemic theory and data survivability in unattended wireless sensor
  networks: Models and gaps.
\newblock {\em Pervasive and Mobile Computing}, 9(4):588 -- 597, 2013.

\bibitem{42}
W.T. Richard and J.C. Mark.
\newblock Modeling virus propagation in peer-to-peer networks.
\newblock {\em IEEE International Conference on Information, Communication and
  Signal Processing}, pages 981--985, 2005.

\bibitem{45}
B.~Shen, Z.~Wang, and Y.S. Hung.
\newblock Distributed consensus h-infinity filtering in sensor networks with
  multiple missing measurements: the finite-horizon case.
\newblock {\em Automatica}, 55(7):1682--1688, 2010.

\bibitem{17}
C.~Tsai, C.~Lai, and V.~Vasilakos.
\newblock Future internet of things: Open issues and challenges.
\newblock {\em ACM/Springer Wireless Networks}, 20(8):2201--2217, 2014.

\bibitem{18}
J.~Wan, H.~Yan, H.~Suo, and F.~Li.
\newblock Advances in cyber-physical systems research.
\newblock {\em KSII Transactions on Internet and Information Systems},
  5(11):1891--1908, 2011.

\bibitem{44}
Z.~Wang, D.W.C. Ho, H.~Dong, and H.~Gao.
\newblock Robust h-infinity finite-horizon control for a class of stochastic
  nonlinear time-varying systems subject to sensor and actuator saturations.
\newblock {\em IEEE Trans. Automat. Control}, 55(7):1716--1722, 2010.

\bibitem{27}
L.-X. Yang and X.~Yang.
\newblock The spread of computer viruses under the influence of removable
  storage devices.
\newblock {\em Appl Math Comput}, 219(8):3914--3922, 2012.

\bibitem{21}
L.-X. Yang, X.~Yang, J.~Liu, Q.~Zhu, and C.~Gan.
\newblock Epidemics of computer viruses: A complex-network approach.
\newblock {\em Appl Math Comput}, 219(16):8705--8717, 2013.

\bibitem{28}
L.-X. Yang, X.~Yang, L.~Wen, and J.~Liu.
\newblock A novel computer virus propagation model and its dynamics.
\newblock {\em Int J Comput Math}, 89(17):2307--2314, 2012.

\bibitem{29}
L.-X. Yang, X.~Yang, Q.~Zhu, and L.~Wen.
\newblock A computer virus model with graded cure rates.
\newblock {\em Nonlinear Anal: Real World Appl}, 14(1):414--422, 2013.

\bibitem{30}
M.~Yang, Z.~Zhang, Q.~Li, and G.~Zhang.
\newblock An slbrs model with vertical transmission of computer virus over the
  interne.
\newblock {\em Discrete Dyn Nat Soc}, 2012, 2012.

\bibitem{26}
X.~Yang and L.-X. Yang.
\newblock Towards the epidemiological modeling of computer viruses.
\newblock {\em Discrete Dyn Nat Soc}, 2012, 2012.

\bibitem{24}
C.~Zhang, Y.~Zhao, and Y.~Wu.
\newblock An impulse model for computer viruses.
\newblock {\em Discrete Dyn Nat Soc}, 2012, 2012.

\bibitem{25}
C.~Zhang, Y.~Zhao, Y.~Wu, and S.~Deng.
\newblock A stochastic dynamic model of computer viruses.
\newblock {\em Discrete Dyn Nat Soc}, 2012(2012).

\bibitem{22}
Q.~Zhu, X.~Yang, L.-X. Yang, and J.~Ren.
\newblock Modeling and analysis of the spread of computer virus.
\newblock {\em Commun Nonlinear Sci Numer Simu}, 17(2012):5117--5124, 2012.

\bibitem{23}
Q.~Zhu, X.~Yang, L.-X. Yang, and C.~Zhang.
\newblock Optimal control of computer virus under a delayed model.
\newblock {\em Appl Math Comput}, 218(23):11613--11619, 2012.

\bibitem{33}
C.C. Zou, W.B. Gong, D.~Towsley, and L.X. Gao.
\newblock The monitoring and early detection of internet malicious codes.
\newblock {\em IEEE/ACM Trans. Network}, 13(5):961--974, 2005.

\end{thebibliography}
\end{document}